\theoremstyle{plain}
\newtheorem{definition}{Definition}[section]
\newtheorem{theorem}{Theorem}[section]
\definecolor{lavender}{rgb}{0.75, 0.58, 0.89}
\begin{document}

\title{Is Stellar As Secure As You Think?}

\author{\IEEEauthorblockN{Minjeong Kim}
\IEEEauthorblockA{KAIST\\
mjkim9334@kaist.ac.kr}
\and
\IEEEauthorblockN{Yujin Kwon}
\IEEEauthorblockA{KAIST\\
dbwls8724@kaist.ac.kr}
\and
\IEEEauthorblockN{Yongdae Kim}
\IEEEauthorblockA{KAIST\\
yongdaek@kaist.ac.kr}}

\maketitle

\begin{abstract}
Stellar is one of the top ten cryptocurrencies in terms of market capitalization.
It adopts a variant of Byzantine fault tolerance (BFT), named federated Byzantine agreement (FBA), which generalizes the traditional BFT algorithm to make it more suitable for open-membership blockchains. 
To this end, FBA introduces a concept called \textit{quorum slice}, which consists of a set of nodes. 
In FBA, a node can complete one consensus round when it receives specific messages from nodes in a quorum slice appointed by the node.
In this study, we analyze FBA, whose security is highly dependent on the structure of the quorum slices, and demonstrate that it is not superior to the traditional BFT algorithm in terms of \textit{safety} and \textit{liveness}.
Then, to analyze the security of the Stellar consensus protocol (SCP), which is a construction for FBA, we investigate the current quorum slices in Stellar. 
We analyze the structure of quorum slices and measure the influence of each node quantitatively using two metrics, PageRank (PR) and the newly proposed NodeRank (NR).
The results show that the Stellar system is significantly centralized. 
Thereafter, to determine how the centralized structure can have a negative impact on the Stellar system, we study the \textit{cascading failure} caused by deleting only a few nodes (i.e., validators) in Stellar.
We show that all of the nodes in Stellar cannot run SCP if only two nodes fail.
To make matters worse, these two nodes are run and controlled by a single organization, the Stellar foundation.


\end{abstract}


\IEEEpeerreviewmaketitle

\section{Introduction}
\label{sec:intro}

Currently, many cryptocurrencies based on peer-to-peer networks use open ledgers called \textit{blockchains}~\cite{bonneau2015sok}, on which transactions are publicly recorded. 
Nodes that validate a set of transactions that are going to be recorded in the blockchain are called \textit{validators.} 
To make an agreement among validators, a system has a consensus mechanism.
Nowadays, many blockchain projects utilize practical Byzantine fault tolerance (PBFT)~\cite{castro1999practical} as a consensus mechanism because it has some advantages, such as a high transaction throughput and no waste of energy~\cite{neo, ziliqa, miller2016honey,armknecht2015ripple,cachin2016architecture}.
Owing to these advantages, Stellar also uses a variant of PBFT~\cite{castro1999practical}, called federated Byzantine agreement (FBA)~\cite{stellar}.
However, because PBFT requires to fix the members participating in the consensus process in advance, it is unsuitable for a blockchain with open membership, where anyone can join in and out at any time.\footnote{The Stellar whitepaper~\cite{stellar} never uses the term ``permissionless blockchain". Instead, it uses the term ``open membership", but it is difficult to distinguish the difference of the meaning between the two terms.} 
Therefore, to overcome this limitation of PBFT, FBA provides an open membership service based on a \textit{trust model}, where anyone can join the network, but only nodes trusted by others can be validators.
More specifically, in FBA, each node selects a set of nodes that it trusts, which is called a \textit{quorum slice}.
To make an individual decision on a given statement, it needs to receive specific messages from the members of its quorum slice.
A \textit{quorum} is an union set of all slices for each member in it, which leads to a global consensus on a given statement.
Since the behavior of members in a quorum slice affects the node that creates it, the quorum slice is an important component of the Stellar security.
However, to the best of our knowledge, there has been no study on the structure of quorum slices in the current Stellar system and how a poor structure can affect the system.
In this study, for the first time, we 1) analyze FBA in terms of safety and liveness, 2) measure and analyze the structure of Stellar quorum slices, and 3) evaluate the robustness of the current Stellar system against cascading failure. 
In the rest of this section, we briefly describe each of our contributions.


\noindent\textbf{FBA Analysis. }
For the FBA analysis, the term \textit{$(f,x)$-fault tolerant (FT) system} describes how vulnerable the system is to a node failure.
The variable $f$ and $x$ indicate the minimum number and minimum fraction of faulty nodes that a system cannot tolerate, respectively.
Therefore, a $(f,x)$-FT system implies that if more than $x$\% $(=\frac{100\cdot f}{N})$ of all active validators are faulty, where $N$ indicates the total number of active validators in the system, then the system will not reach any consensus.
Then, we find that the values of $x$ and $f$ in a FBA-based system depend on the structure of quorum slices, and $x$ ranges from 0 to $\frac{100}{3}$. 
From this, we conclude that the value of $x$ in FBA is always lower than or equal to that in PBFT, because the latter achieves the maximum value of $x$ (i.e. $\frac{100}{3}$).

\noindent\textbf{Data Analysis. }
Next, to determine the values of $x$ and $f$ in Stellar, we study the current structure of quorum slices in it.
To this end, we first collect data for existing quorum slices in the Stellar system. 
With these data, we analyze various characteristics of quorum slices and measure the influence of each node in quorum slices quantitatively. 
In this process, we use two metrics, PageRank (PR) and NodeRank (NR), where NR is newly proposed to better reflect the concept of a quorum slice. 
Based on this analysis, we find two issues in the current Stellar system: 
1) the size of the quorum slices is small, and 2) the influence of each node is significantly biased.
As a result, we observe that the structure of quorum slices is significantly centralized.

\noindent
\textbf{Cascading Failure. }
Finally, we study cascading failures in the centralized Stellar system.
According to our results, all nodes get stuck after only two nodes fail.
This implies that, currently (Jan. 2019), the value of $x$ in Stellar is approximately 4.5 ($\approx\frac{2}{44}\times 100$), where the total number of active validators is 44 and $f$ is 2. 
To make matters worse, the two nodes that cause the entire system to be stuck are run by the same organization, namely the Stellar foundation, making Stellar vulnerable to ``single point of failures''. 

In summary, our contributions are as follows:

\begin{enumerate}
\item We analyze FBA and prove that it is not superior to PBFT in terms of safety and liveness.
\item We conduct a data analysis on the Stellar system, and show that the structure of quorum slices is highly centralized.
\item We study cascading failures considering the current quorum slices. Our results imply that validators cannot achieve a consensus after deleting only two nodes run by the Stellar foundation. 
\end{enumerate}

The remainder of this paper is organized as follows.
In Section~\ref{sec:background}, we present the background of Stellar.
The analysis of FBA and the Stellar system are described in Sections~\ref{sec:theoretical} and \ref{sec:analysis}, respectively.
In Section~\ref{sec:cascade}, we study cascading failures of Stellar.
In Section~\ref{sec:discuss}, we discuss the possible ways to reduce the impact of cascading failures and their limitations.
We describe some related works in Section~\ref{sec:related} and 
conclude in Section~\ref{sec:conclusion}.
\section{Background on Stellar}
\label{sec:background}

Stellar is an open platform based on a blockchain.
It is designed for providing the fast and low cost payment service, including cross-border transactions. 
In this section, we review the main features of Stellar relevant to this study.

\noindent
\textbf{Federated Byzantine agreement (FBA). }
FBA, on which Stellar is based, generalizes Byzantine agreement (BA) to express a greater range of settings.
Therefore, PBFT~\cite{castro1999practical}, one of the best known variants of BA, can be explained by FBA.
Note that BA is an algorithm that guarantees a consensus despite the existence of some Byzantine nodes.
The main difference between FBA and traditional BA (or non-federated BA) is whether they will support open membership, allowing any node to participate in the consensus process if desired.
In the traditional BA, there is no open membership, which must be set by a central authority in advance, to avoid Sybil attacks~\cite{douceur2002sybil}.
On the other hand, in FBA, anyone can have a chance to be in the membership list at any time as long as it is trusted by others. More specifically, any node that has been being included in at least one quorum slice can be in the membership list.

In addition, to make nodes reach a consensus, SCP should provide the following:

\begin{itemize}
\item \textit{safety.} A set of nodes satisfies \textit{safety} if no two of them ever reach an agreement on different values at the same time.

\item \textit{liveness.} A node satisfies \textit{liveness} if it can reach an agreement on a new value.
\end{itemize}

The quorum slice and quorum, key elements of FBA, are explained in detail in the rest of this section. 

\noindent
\textbf{Quorum Slice/Quorum. }
A node in the system must select a set of nodes called \textit{quorum slice}, each consisting of nodes it trusts,\footnote{The terms quorum slice and quorum sets~\cite{crafting_quorum_set} are equivalent. In addition, nodes in quorum slices are based on trust that already has exists in real life, such as bank or financial institution.} and then receive specific messages from them to make an individual decision on a given statement.
A \textit{quorum} is a set of nodes that contains at least one quorum slice for each of its members.
Therefore, a quorum is a set of nodes sufficiently large to reach a consensus within a system and a quorum slice is a subset of a quorum that directly determines a node's consensus.

In fact, a node can belong to multiple quorum slices, and a quorum slice may contain another quorum slice as a member, called a \textit{nested quorum slice}.
Moreover, every quorum slice has a threshold value, which can be different for each slice.
If the number of nodes above the threshold in the quorum slice agrees on the same statement, then the node that has selected the slice also agrees on it. 
Therefore, a quorum slice formed carelessly can cause a failure of the consensus.
To prevent this, each quorum has to satisfy two quorum formation conditions.
The first is that any two quorums should have an intersection even after deleting malicious nodes in the quorums.
This implies that an overlapping portion, consisting of non-Byzantine nodes, must exist. 
The second condition is that a quorum still exists after deleting Byzantine nodes.

These two conditions guarantee the system's safety and liveness. 
However, because the quorum slice structure depends on user-configurable parameters, it is not guaranteed that users will always form quorum slices that satisfy those two conditions.

\noindent
\textbf{Types of node. }
Nodes in the Stellar system can be divided into two types based on their properties:
\begin{itemize}
    \item \textit{well-behaved nodes.} These choose acceptable quorum slices and respond normally to all requests from peers.
    \item \textit{ill-behaved nodes.} These suffer from Byzantine (i.e., acting arbitrarily) or crash (i.e., stopping responding to requests or halting) failures.
\end{itemize}
Thus, an ill-behaved node can send a fake message to others, or fail to send a message at all.
Note that a well-behaved node may not work properly due to the effect of an ill-behaved node, such as waiting endlessly to receive messages.

\section{FBA Analysis}
\label{sec:theoretical}

In FBA, we can classify nodes into three groups according to their properties.
The first group \textit{A} includes ill-behaved nodes, and the second group \textit{B} includes well behaved nodes that cannot work properly because they are affected by ill-behaved nodes (i.e., those in group \textit{A}).
The third group \textit{C} includes well-behaved nodes that are not affected by ill-behaved ones, and thus, work properly.

When considering the three groups, SCP, which is a construction for FBA, only guarantees the correctness of \textit{C} under the assumption that nodes in \textit{C} satisfy the quorum formation conditions, whereas the safety and liveness of groups \textit{A} and \textit{B} are not guaranteed. 
Nevertheless, if most of the well-behaved nodes in Stellar always belong to \textit{C}, the system can be regarded as secure.

However, whether most nodes can still belong to \textit{C} after the failure of a few nodes depends significantly on the structure of the quorum slices.
This is because even with the same set of nodes in \textit{A}, the number of nodes in \textit{B} could change by the structure.
For example, assume that a validator named \textit{Alice} is the only ill-behaved node.
If every quorum slice in Stellar consists solely of \textit{Alice}, then they will all be affected by \textit{Alice}'s failure. 
In other words, once \textit{Alice} goes to \textit{A}, all the remaining nodes would belong to \textit{B} because their consensus process is blocked due to the lack of messages from \textit{Alice}.
Therefore, because there is no node in \textit{C}, the safety and liveness for all nodes are not guaranteed.

On the other hand, when all nodes evenly select members in their slices, most of the remaining nodes are still in the group \textit{C} even after the failure of \textit{Alice} node.
In this case, the safety and liveness of the system are much stronger than in the previous case.
This implies that the structure of quorum slices can significantly affect the safety and liveness of FBA.

In fact, because in Stellar, users select trusted validators in their slices manually, the structure of quorum slices may be significantly unsafe, depending on users' choices.
Therefore, it is possible for the system to have an inappropriate structure of quorum slices where most nodes are in group \textit{B} rather than in \textit{C}, even if the size of \textit{A} is small. 

To quantitatively measure the extent to which the structure of quorum slices can affect the system, we define a $(f, x)$-FT system, representing how much the system is tolerant to ill-behaved nodes (i.e., the nodes in group \textit{A}).

\begin{definition}[$(f, x)$-FT system]
\label{x-crushed}
In $(f, x)$-FT systems, all nodes can eventually agree on the same value that does not contradict the history of a consensus process, if less than $f$ nodes are ill-behaved, accounting for $x\%$ of the total active validators.
\end{definition} 

Note that the value of $x$ in FBA can be changed depending on the structure of the quorum slices.
The smaller the values of $f$ and $x$, the less tolerant the system is to ill-behaved nodes.
The following theorem shows the range of $x$ in FBA. 

\begin{theorem}
\label{thm:max}
The value of $x$ in FBA ranges from 0 to $\frac{100}{3}.$
\end{theorem}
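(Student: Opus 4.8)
The plan is to establish the two endpoints of the interval separately: a construction showing that $x$ can be pushed arbitrarily close to $0$, and an impossibility argument showing that $x$ can never exceed $\frac{100}{3}$. Throughout I will use the characterization of $x$ as $\frac{100 f}{N}$, where $f$ is the minimum number of ill-behaved nodes whose failure can keep the remaining nodes from reaching agreement, so that a larger $x$ corresponds to a more fault-tolerant structure.

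For the lower end, I would exhibit the maximally centralized structure already hinted at in the \emph{Alice} example: let every node place the single node \emph{Alice} in its quorum slice, with a threshold that forces reliance on her. Then failing this one node ($f=1$) drives every other node into group \emph{B}, so the minimum breaking number is $f=1$ and $x=\frac{100}{N}$. Letting $N$ grow shows that $x$ takes values arbitrarily close to $0$; it is bounded below only by $\frac{100}{N}>0$, since $f=0$ corresponds to no failure at all. This pins the infimum of the range at $0$.

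For the upper end, I would argue that the two quorum formation conditions — honest intersection of any two quorums, and survival of a quorum after deleting Byzantine nodes — are by themselves incompatible with a Byzantine fraction of at least $\frac{1}{3}$, regardless of how the slices, thresholds, or nested slices are chosen. The key step is the classical three-partition argument: assuming $N\le 3f$, split the validators into three sets each of size at most $f$. Applying the availability condition in the scenario where one set has failed yields a fully honest quorum $Q_1$ inside the union of the other two; applying it to a second failure scenario yields another quorum $Q_2$, and by construction the only members $Q_1$ and $Q_2$ can share lie in the remaining third set. The scenario in which that third set is the Byzantine one then contradicts the intersection condition, since after deleting those nodes $Q_1\cap Q_2$ contains no honest node. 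Hence any admissible structure must satisfy $N>3f$, so the system is guaranteed correct whenever the faulty fraction stays below $\frac{1}{3}$, giving $x\le\frac{100}{3}$. I would then note that the uniform PBFT-style structure meets this bound: with $N=3t+1$ it tolerates $t$ faults, so its breaking fraction $\frac{100(t+1)}{3t+1}$ tends to $\frac{100}{3}$, certifying that $\frac{100}{3}$ is the supremum of the range and recovering the earlier claim that FBA is never more tolerant than PBFT.

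The main obstacle I anticipate is making the upper-bound argument fully general: FBA permits heterogeneous, nested, threshold-weighted slices, so I must ensure the three-partition contradiction is derived purely from the abstract intersection and availability conditions, required to hold for \emph{every} Byzantine set of size at most $f$, rather than from any uniform-quorum assumption. A secondary point to handle carefully is the reconciliation between the discrete breaking number $f$ and the continuous bound $\frac{100}{3}$: for small $N$ the breaking fraction slightly exceeds $\frac{100}{3}$, so I would present $\frac{100}{3}$ as the limiting (large-$N$) threshold that PBFT approaches, consistent with the paper's treatment of $x$ as the tolerance threshold.
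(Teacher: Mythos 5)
Your proposal is correct, and while your lower-bound construction matches the paper's (the paper has every slice consist of the same three nodes rather than your single \emph{Alice}, then lets $N$ grow so that $x\to 0$ --- the same idea), your upper-bound argument takes a genuinely different route. The paper introduces separate safety and liveness thresholds $t_s$ and $t_l$, asserts the counting constraint $N - t_s - t_l \geq t_s$ (justified only informally, by the remark that nodes sending correct messages must outnumber those sending incorrect ones), and then maximizes $\min(t_s,t_l)$ subject to $N \geq 2t_s + t_l$ via a two-case algebraic optimization, concluding $\max x = \frac{100}{3}$. You instead derive $N > 3f$ directly from the two quorum formation conditions through the classical three-partition argument: partition the validators into three sets of size at most $f$, invoke availability in two failure scenarios to obtain quorums $Q_1 \subseteq P_2 \cup P_3$ and $Q_2 \subseteq P_1 \cup P_3$, and contradict quorum intersection in the scenario where $P_3$ is Byzantine, since $Q_1 \cap Q_2 \subseteq P_3$. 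This is the standard dissemination-quorum-system bound, and it buys you something the paper's proof lacks: the bound is deduced from the FBA quorum axioms actually stated in the paper, for arbitrary heterogeneous, nested, threshold-weighted slices, rather than resting on an asserted message-counting inequality imported from the PBFT setting. The paper's version, in exchange, is shorter and keeps safety and liveness as separate parameters, which serves its narrative comparison with PBFT. Your closing remark about the discrete-versus-continuous mismatch --- that for finite $N$ the breaking fraction $\frac{100(t+1)}{3t+1}$ slightly exceeds $\frac{100}{3}$ and the bound should be read as a large-$N$ supremum --- is also more careful than the paper, which glosses over this point entirely.
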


\begin{proof}
Let $t_s$ and $t_l$ be the thresholds for safety and liveness, respectively, in an asynchronous deterministic system~\cite{castro1999practical}.
Then, the system is safe if less than $t_s$ nodes are ill-behaved.
Likewise, the system can exhibit liveness if less than $t_l$ nodes are ill-behaved.

First, we obtain the maximum value of $x$ as $\frac{\min (t_s,t_l)}{N}\times 100$, where $N$ is the total number of validators in the system. 
Note that $N-t_s-t_l\geq t_s$ should met, because the number of nodes sending a correct message should be larger than or equal to the number of nodes sending an incorrect message\footnote{Note that ill-behaved node is defined in Section~\ref{sec:background}.}. 
Therefore, for the maximum value of $x$, it is sufficient to calculate $\frac{\min (t_s,t_l)}{N}\times 100$, considering $N\geq 2t_s+t_l.$ 
To do this, we consider two cases: 1) $t_s< t_l$ and 2)  $t_s\geq t_l.$
In the first case, the following is satisfied. 
\begin{equation*}
\max\{\min (t_s,t_l)\}=\max t_s=\max \frac{N-t_l}{2}    
\end{equation*}
Because $\frac{N-t_l}{2}$ is less than $t_l,$ the maximum value of $t_s$ is $\frac{N}{3}$ according to the below equation. 
\begin{equation*}
    \frac{N-t_l}{2}<t_l \Longleftrightarrow \frac{N}{3}<t_l,\,\,t_s<\frac{N}{3} 
\end{equation*}
As a result, the maximum value of $x$ is $\frac{100}{3}.$
In the second case, the below equation is met.  
\begin{equation*}
\max\{\min (t_s,t_l)\}=\max t_l=\max (N-2t_s)    
\end{equation*}
Similar to the first case, we can find out that the maximum value of $x$ is $\frac{100}{3}.$ 

Next, we prove that the minimum value of $x$ is 0. 
It is sufficient to give an example where $x=0.$
We consider the case where there are a sufficiently large number of validators and every slice only contains the same three nodes. In this case, when those three nodes fail, the entire system would be stuck. 
Therefore, for a sufficiently large $N,$ $x$ is close to 0, which completes the proof. 
\end{proof}

The above theorem states the maximum value of $x$ is $\frac{100}{3}.$ Note that the value of $x$ in PBFT is $\frac{100}{3}$ because it can guarantee safety and liveness until less than $\frac{1}{3}$ fraction of all nodes are Byzantine~\cite{castro1999practical}. 
Therefore, Theorem~\ref{thm:max} implies that FBA is not superior to PBFT in terms of safety and liveness.
As an extreme case, if the structure of quorum slices is completely centralized (i.e., every node selects only one common validator in its slice), the system collapses with a failure of only one node, which implies that $x$ is close to 0.

\section{Data Analysis}
\label{sec:analysis}
In this section, we determine the characteristics of the quorum slices in the current Stellar system and analyze the nodes' influence quantitatively. 
Further, through this analysis, we confirm that the Stellar system is significantly centralized. 
In other words, in the current Stellar system, 1) each quorum slice contains only a few validators, and 2) the influence among validators is highly biased.

\subsection{Characteristics of Quorum Slices}
\label{subsec:numberOfValidators}

\begin{figure*}[t]
\centering
 \subfloat[Total number of validators and quorum slices in the system over time]{
\includegraphics[width=0.4\textwidth]{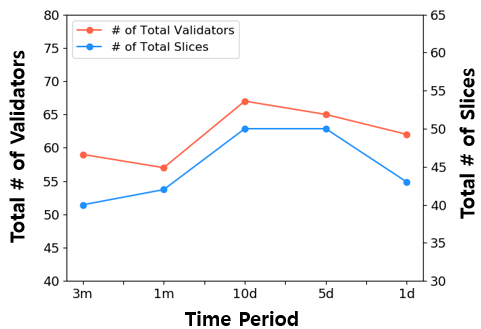}
\hspace{1cm}
 \label{fig:numberOfValidators}
 }
  \subfloat[Number of validators in each quorum slice]{
\includegraphics[width=0.37\textwidth]{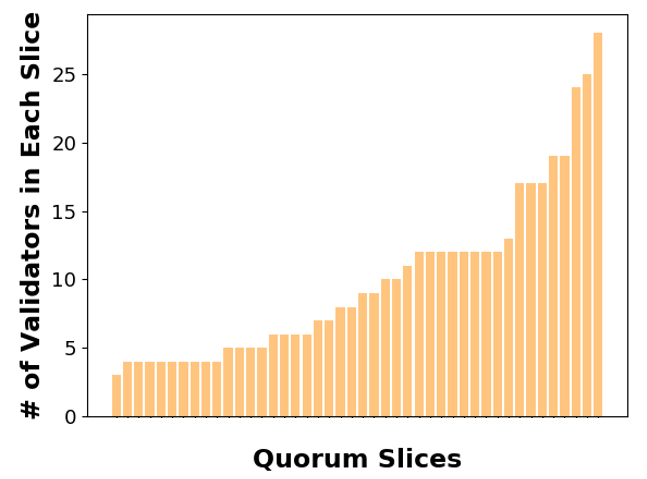}
 \label{fig:nodesInSlice}
 }
\caption{Statistics of quorum slices and validators}
\end{figure*}

\smallskip\noindent
\textbf{Methodology. }
We join multiple nodes in the Stellar system as validators (i.e., full nodes).
Subsequently, we collect data on the Stellar system through these validators.
The data includes the validators' Internet Protocol (IP) addresses, the generated quorum slices, and a list of existing validators.
For the data reliability, we also collect the corresponding data from several websites, such as mystellar.tools~\cite{mystellar}, stellarbeat.io~\cite{stellarbeat.io}, and stellar.org~\cite{stellar.org}, which provide information on the quorum slices and validators.

In Section~\ref{subsec:numberOfValidators}, we first measure the number of total validators and quorum slices in the long and short terms to investigate the characteristics of the quorum slices in the system.
Subsequently, we measure the number of nodes in each existing slice.
To this end, we analyze data over the three months, one month, ten days, five days, and one day before January 29, 2019. 

\smallskip\noindent
\textbf{Results and Implications. }
The results are presented in Fig.~\ref{fig:numberOfValidators}, where
the red and blue lines indicate the number of validators (in the left axis) and the number of quorum slices (in the right axis), respectively. 
As shown in Fig.~\ref{fig:numberOfValidators},
the number of validators during the period typically ranges from 57 to 67, and the number of existing quorum slices ranges from 40 to 50.
Moreover, only 31 validators remain active continuously throughout the whole study period.
This implies that there are significantly small numbers of validators and slices in the Stellar system, and to make matters worse, only half of the existing validators are active continuously.
The others are just temporary validators that easily churn in and out of the system.

Fig.~\ref{fig:nodesInSlice} represents the result of the number of nodes in each existing slice.
It shows that 58\% of the quorum slices have ten or fewer validators.

\smallskip\noindent
\textbf{Causal Analysis. }
The first reason for the small number of validators is the insufficient incentivization to make users participate in Stellar as validators. 
Unlike many other blockchain systems that give internal incentive to nodes participating in a consensus algorithms (e.g., financial rewards such as mining reward in the proof-of-work  system), Stellar does not.
\begin{table}[t]
\centering
\caption{Motivations of players participating in Stellar as validators.}
\begin{tabular}{|c|c|c|c|}
\hline
Purpose      &Type       & Number of players    & Rate (\%)  \\
\hline \hline
\multirow{2}{*}{For-profit}   &Business with Stellar   & 23    & 74.2\\ \cline{2-4}
    &Stellar Foundation          & 3            & 9.7  \\  \hline
Non-profit      & Individual            & 1             & 3.2   \\\hline
\multicolumn{2}{|c|}{Unknown}            & 4             & 12.9  \\
\hline
\end{tabular}
\label{table:purposeOfValidator}
\end{table}

To investigate why the current validators are present without such incentivization, we analyze the 31 validators that were continuously active during the studied period.
Note that the others between 14 and 36 validators are just temporary validators. 
Fortunately, those 31 active validators all have exposed their identities (i.e., name, contact address, node ID and description) publicly in the website~\cite{validator_list} because they need to advertise themselves in order to be trusted by others.
Therefore, with the data, we observe the characteristics of the 31 validators participating in the Stellar system.

These observations are shown in Table~\ref{table:purposeOfValidator}, which classifies the relationships between validators and Stellar into four types.
The first type indicates a player that has a business with Stellar. 
For example, \textit{IBM} is providing a blockchain-based payment system with Stellar and runs nine validators.
Some organizations, such as \textit{satoshiPay} or \textit{tempo.eu.com} are in partnerships with Stellar for building applications on the Stellar platform.
The second type is controlled by the Stellar foundation itself, such as \textit{sdf\_validator1}.
The third type is a node (run by an individual) that participates in the system for better decentralization of Stellar, rather than for profit.
Furthermore, despite the website's~\cite{validator_list} information, there are still some nodes whose identities or types are not clear, so this kind of nodes is classified into the fourth type.
From Table~\ref{table:purposeOfValidator}, we can see that 83.9\% of validators (including types one and two) are directly related to Stellar, 3.2\% are non-profit, and the rest is unknown.
This implies that only a small number of specific nodes that can receive external profits using Stellar are motivated to work as validators.
Therefore, if there is an internal profit, nodes that do not receive external profits will also be motivated to work as validators, leading to a much larger number of validators.

In addition, the second reason for the small number of validators (not only in the system but also in each quorum slice) is that Stellar is based on a trust model.
According to the structure of the current quorum slices, participants in Stellar network trust only a few organizations, i.e., those that are significantly popular or highly relevant to them.
For example, \textit{satoshiPay} whose quorum slice consists of \textit{sdf\_validator1}, \textit{sdf\_validator2}, \textit{sdf\_validator3} and \textit{eno}, is a micro-payment solution company collaborating with Stellar.
Considering that \textit{sdf\_validator} nodes are run by the Stellar foundation, one can see that the quorum slice of \textit{satoshiPay} consists of mostly nodes that are in partnership with Stellar.
This indicates that a node tends to choose the validators to which it is related.
Therefore, the trust model inevitably results in fewer nodes in the slices.

\subsection{Node Influence }
\label{subsec:AnalysisBiased}

Next, to determine what extent of the power in Stellar is biased, we first visualize the structure of the quorum slices. Subsequently, we measure the influence of each node quantitatively.

\begin{figure}[h]
\centering
\includegraphics[width=\columnwidth]{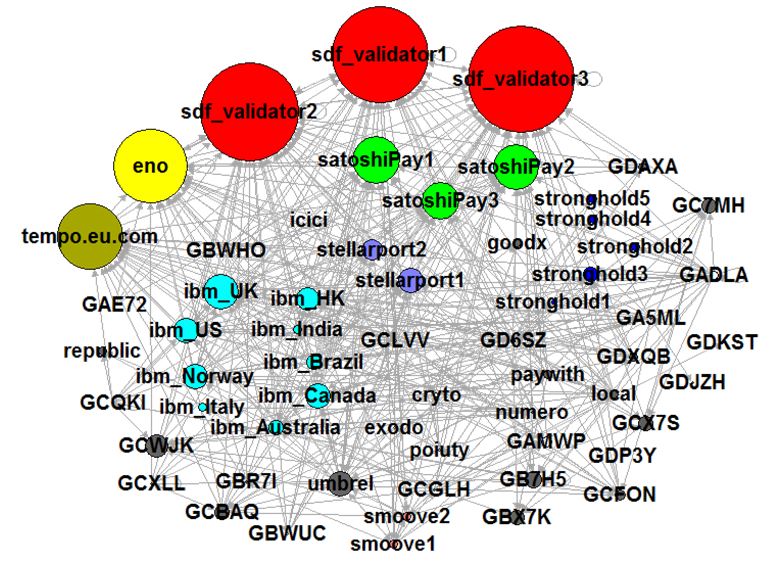}
\caption{Directed graph of quorum slices in Stellar}
\label{fig:directed_graph}
\end{figure}

\smallskip\noindent
\textbf{Visualization of Quorum Slices. }
Fig.~\ref{fig:directed_graph} shows the structure of the quorum slices on January 22, 2019, as a directed graph.
In this figure, each circle represents a validator, where the size of the circle is proportional to the number of incoming edges on each node.
In fact, the edges in the directed graph are divided into incoming and outgoing, depending on the direction.
As an example, for two validators $A$ and $B,$ if $A$ generates its quorum slice including $B,$ an edge exists from $A$ to $B$ and is referred to as the outgoing edge of $A$ and the incoming edge of $B.$ 
In particular, the number of incoming edges of a node can be indicative of the influence of a node because it is proportional to the trust it bears.
Furthermore, vertexes with the same color are run by the same organization.

From Fig.~\ref{fig:directed_graph}, we can visually see which nodes are more influential (i.e., the one with more incoming edges).
Accordingly, \textit{sdf\_validator} nodes have the most incoming edges, followed by \textit{eno} and \textit{tempo.eu.com}. 
In addition, there are a total of 62 validators run by approximately 37 different organizations.

\smallskip\noindent
\textbf{Evaluation of Node Influence. }
Next, we analyze the influence of each validator quantitatively using two metrics: PR and the newly proposed NR.

In a directed graph, the influence of a node usually depends on both the number of incoming edges and the weight of the edges. 
Therefore, to consider the weight of the edges, we adopt PR~\cite{page1998pagerank}, which is typically used to rank websites in the Google search engine.
Using PR, we can assign a higher score to a node that contains many incoming edges from influential nodes. 

However, considering only the weight of each edge is not sufficient to measure the influence of each node in Stellar accurately. 
A quorum slice, which affects the node influence, consists of a validator, a nested quorum slice, and threshold.
Thus, to measure the influence of one node in the FBA model, we need to consider the following three elements:
\begin{enumerate} 
\item The number of quorum slices that contain the node. 
\item Whether an influential node chooses the node in its slice.
\item The value of the threshold (high or low) of the slice containing the node. 
\end{enumerate}
While PR only considers 1) and 2), the proposed NR considers all the three elements. 

In Stellar, each quorum slice has a different power depending on who creates it.
For instance, when an IBM's node is influential (i.e., many nodes have the node as a member of their slices), the quorum slice of the IBM node has more power than a lesser-known slice.
Accordingly, for the elements 1) and 2), NR considers the weight of each slice as PR of the slice's generator, who creates the slice.
If there are several generators for one slice, the weight of the slice is the sum of the generators' PR.

For the element 3), we give an example below for easier understanding. Assume that a quorum slice of node $n_1$ consists of nodes $n_1$, $n_2$, and $n_3$ with threshold 3, while another quorum slice of node $n_4$ consists of nodes $n_4$, $n_5$, and $n_6$ with threshold 2. 
If node $n_2$ fails, node $n_1$ cannot achieve the consensus, because $n_1$ cannot receive the messages above threshold 3 from the members of its quorum slice.
Meanwhile, even if node $n_5$ fails, node $n_4$ can still reach the consensus, because its quorum slice has threshold 2. 
Thus, the influences of $n_2$ and $n_5$ are different, depending on the threshold of the respective quorum slices. 
More specifically, node $n_2$ is more influential than node $n_5.$ 
In general, the higher the threshold of the quorum slice containing the node, the greater the influence of the node.
%

Next, we define NR of node $v$ as follows: 
\begin{equation}
    \begin{aligned}
    a_0 (Q,v)&=1,\quad a_k (Q,v)=\frac{T_Q}{|Q|}\times a_{k-1}(Q,v)\\
    \text{NR}&=\sum_{Q \in \bm{Q}_v} \sum_{G \in \bm{G}_Q} \text{PR}_{G}\times a_{l(Q,v)} (Q,v)
    \label{eq:nr}
    \end{aligned}
\end{equation}
In Eq.~\eqref{eq:nr}, the following notations are used; 
the set of all quorum slices that include node $v$ is denoted by $\bm{Q}_v,$ and $T_Q$ is a threshold of quorum slice $Q.$ 
The number of members of $Q$ is denoted by $|Q|.$ 
The parameter $l(Q,v)$ is how many times $Q$ that includes node $v$ is nested. Therefore, if $Q=\{n_1, \{n_2\}, \{\{n_3\}\}\},$ then $l(Q,n_1), l(Q,n_2),$ and $l(Q,n_3)$ are 1,2, and 3, respectively. 
In addition, $\bm{G}_Q$ is the set of nodes (i.e., generators) that create the quorum slice $Q$, and the PR value of $G$ is denoted by $\text{PR}_{G}.$ 
For ease of understanding NR, we present the following example. 
Assume that only one quorum slice $Q$ includes node $v$; 
$Q=\{t:3, n_1, n_2,\{t:1, v, n_3\}\}$, and $Q$ is created by two nodes whose PR values are 0.01 and 0.02, respectively.
Then $a_{l(Q,v)}(Q,v)$ is 
$$a_{l(Q,v)}(Q,v)=a_{2}(Q,v)=\frac{3}{3}\times a_1(\{t:1, v, n_3\},v)=\frac{1}{2},$$
and NR of $v$ is $0.03\times\frac{1}{2}.$

\begin{figure}[t]
\centering
\includegraphics[width=\columnwidth]{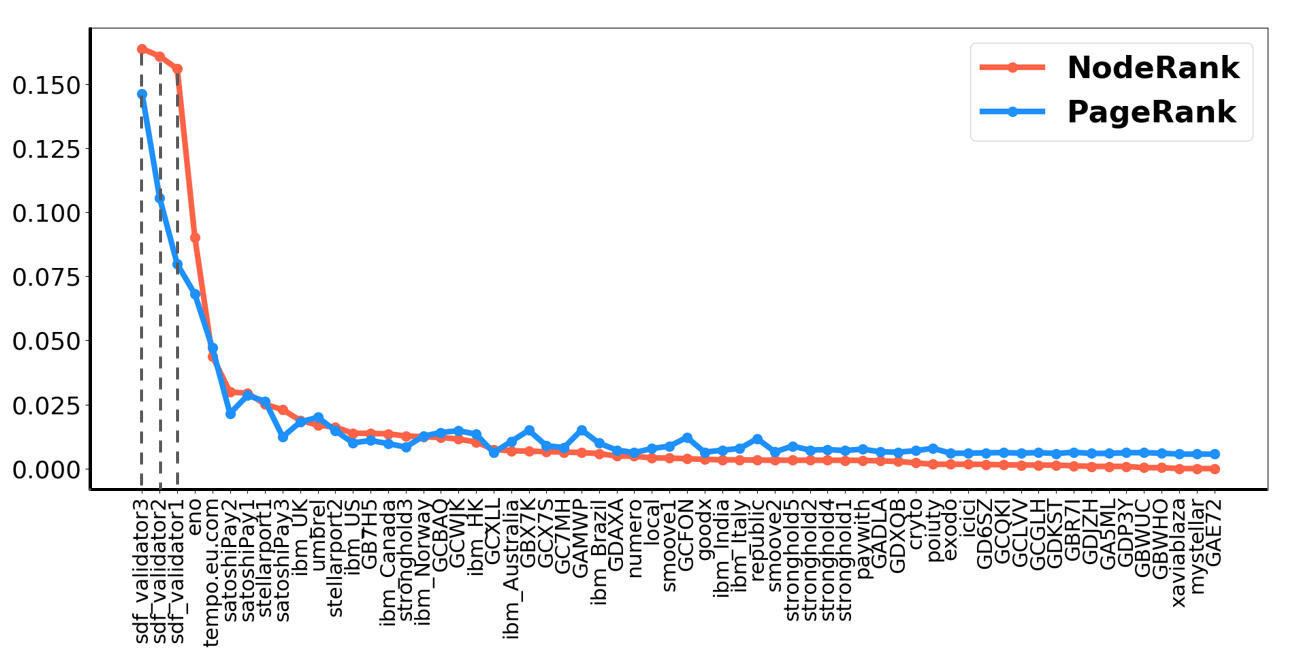}
\caption{NodeRank (NR) and PageRank (PR)}
\label{fig:metrics}
\end{figure}

We calculate NR of the nodes according to Eq.~\eqref{eq:nr}, and normalize the values. 
Fig.~\ref{fig:metrics} shows the result of the two metrics, and the $x$ and $y$-axes represent the names of the validators and values of metrics, respectively.
The red and blue lines represent NR and PR, respectively.
From Fig.~\ref{fig:metrics}, it can be seen that three nodes, namely \textit{sdf\_validator1, sdf\_validator2,} and \textit{sdf\_validator3}, have the largest NR and PR values, implying that they are significantly influential.
However, when comparing their NR and PR values, the difference between the NR values of \textit{sdf\_validators} is small, while the difference between the PR values of \textit{sdf\_validators} is large.
This implies that the influences of \textit{sdf\_validators} are similar in terms of NR.
In Section~\ref{sec:cascade}, we will confirm that \textit{sdf\_validators} have similar influence with respect to cascading failure as the NR result shows.

Consequently, according to these two metrics, we quantitatively show that the node influence is biased toward a few nodes. 
Specifically, the three nodes run by the Stellar foundation have the greatest influence. 

\smallskip\noindent
\textbf{Causal Analysis. }
This problem may also be drawn by the fact that Stellar is based on the trust of users.
Under the trust model, it is inevitable for the node's influence to be biased toward well-known nodes or nodes directly related to many others. 
Through data analysis, we confirm this and observe that at least one of the \textit{sdf\_validator} nodes is included in all slices.
Note that many existing validators are having business with Stellar as shown in Table~\ref{table:purposeOfValidator}.
%
%
%
%

\smallskip\noindent
\textbf{Summary. }In summary, our data analysis indicates that that the number of validators in each slice is significantly small, and that the structure of quorum slices is biased, which indicates that the current Stellar structure of quorum slices is highly centralized.

\section{Cascading Failure}
\label{sec:cascade}

In the previous section, we see that the structure of quorum slices is highly centralized. 
In this section, we study \textit{cascading failure}. 
Note that \textit{cascading failure} in interconnected systems implies a situation where the failure of a few nodes triggers gradual failure of other nodes.
In the case of SCP, there might exist dependent node failures because each node is designed to be influenced by other nodes. This means that if one node fails, it may be possible for another node to fail as an effect of the failed node.
Moreover, in Stellar, the degree of robustness against the cascading failure would depend largely on the structure of quorum slices. 
Thus, the current Stellar system can be highly vulnerable to the cascading failure because of a significantly centralized structure of quorum slices.
To find out how much the system can be currently weakened by the cascading failure, we apply it to Stellar.

\subsection{Cascading Failure in the Stellar system}

We consider the quorum slices obtained from the Stellar system on January 8, 2019, consisting of a total of 62 validators. 
We first define $Failure$ ratio as follows:
\begin{equation*}
    \text{Failure }(\%)= \frac{\text{Number of failed validators}}{\text{Total number of validators}}\times 100.
\end{equation*}
Note that the number of failed validators are equal to the number of nodes in group \textit{A} or \textit{B}, as explained in Section ~\ref{sec:theoretical}.
As a baseline, 18 nodes (29\% of nodes) in the Stellar system were already offline for unknown reasons.
In such a state, we confirm that when one of the online nodes becomes unavailable (i.e., it is forced to go offline owing to a targeted attack, such as DDoS or network attacks~\cite{ddos,bitcoin_ddos,NEO_ddos}), the node with the 18 offline nodes (i.e. (18+1)/62*100=30.6\% of the Stellar system) are unable to reach a consensus.
In this case, one can see that if one node fails, the system is not significantly influenced.

However, the situation is very different when two nodes become unavailable, as Fig.~\ref{fig:fail2} shows.
In the figure, the $x$-axis represents all possible pairs of validators, and the $y$-axis represents $Failure$, the percentage of nodes that becomes unable to reach a consensus. 
Red dots represent cases when the $Failure$ ratio is above 90 \%.
Six such cases are shown in the figure, and summarized as follows:

\begin{itemize}[topsep=2pt,itemsep=0pt,parsep=0pt,partopsep=0pt,leftmargin=18pt]
    \item[(1)] \texttt{\small sdf\_validator1} and \texttt{\small sdf\_validator2}: $Failure$  = 100 \%
    \item[(2)] \texttt{\small sdf\_validator1} and \texttt{\small sdf\_validator3}: $Failure$  = 100 \%
    \item[(3)] \texttt{\small eno} and \texttt{\small sdf\_validator1}: $Failure$ = 90.3 \%
    \item[(4)] \texttt{\small sdf\_validator2} and \texttt{\small sdf\_validator3}: $Failure$  = 100 \%
    \item[(5)] \texttt{\small eno} and \texttt{ \small sdf\_validator2}: $Failure$  = 90.3\%
    \item[(6)] \texttt{\small eno} and \texttt{\small sdf\_validator3}: $Failure$  = 91.9 \%
\end{itemize}
The results show that the failure of two nodes with large PR and NR values causes large portions of the system to fail.

To observe how cascading failure proceeds, we describe an example, where \textit{sdf\_validator1} and \textit{sdf\_validator2} fail (case 1).
Initially (round 1), 20 nodes fail to reach a consensus. Note that 18 nodes were already offline before deleting the two nodes. The failure of these 20 nodes causes an additional 28 nodes (round 2) to fail (reaching a total of 48 nodes), and those 48 nodes lead to failure (round 3) of an additional 12 nodes (reaching a total of 60 nodes). 
Finally, the failure of these 60 nodes causes the remaining 2 nodes to fail (reaching a total of 62 nodes) over two rounds (round 4 and 5). 
Because the structure of quorum slices in Stellar consists of 62 nodes (validators), all nodes fail to reach the consensus over five rounds, when deleting only two nodes.

\begin{figure}[t]
\centering
\includegraphics[width=\columnwidth]{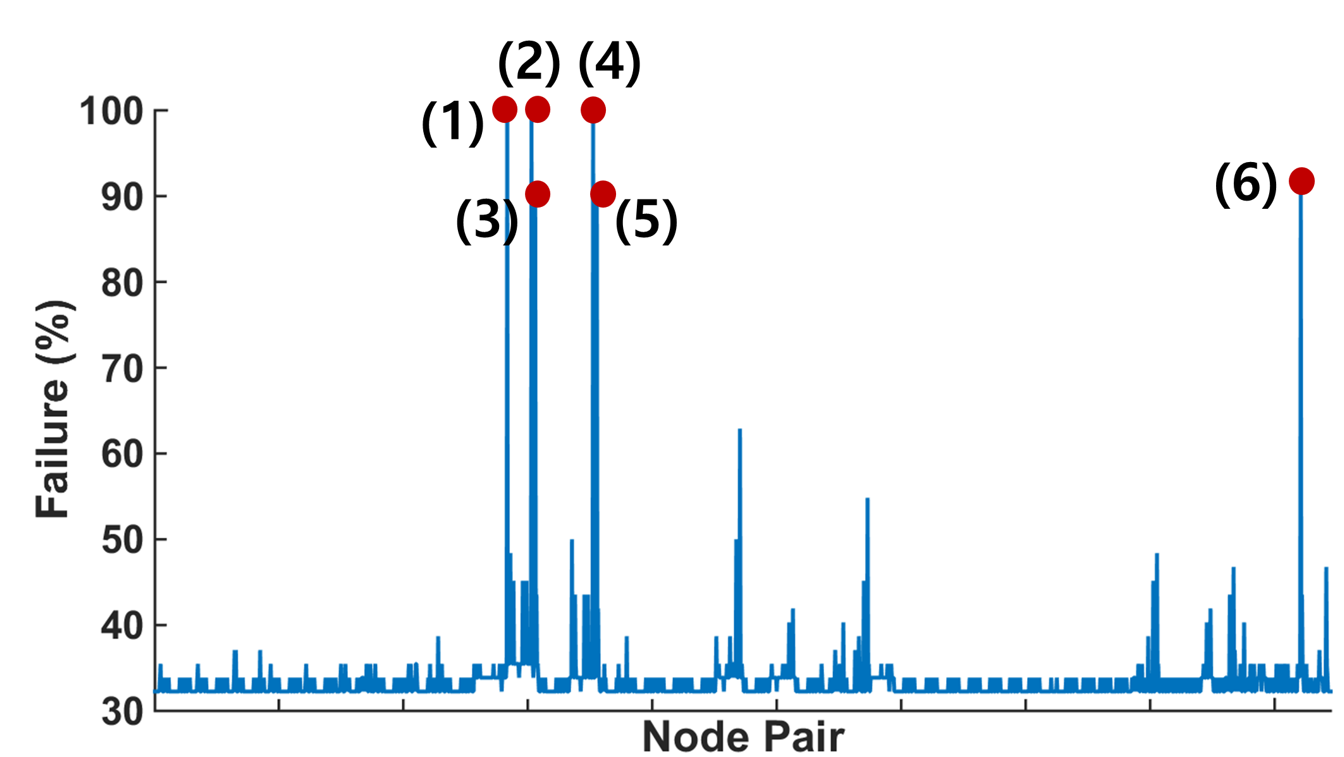}
 \caption{Results of the cascading failure}
 \label{fig:fail2}
\end{figure}

We note that the $Failure$ is 100\% in cases (1), (2), and (4).
This shows that under the current Stellar protocol, if only two nodes are ill-behaved (i.e., two nodes go to group \textit{A}), then all the other nodes eventually cannot reach consensus, because they all go to group \textit{B}.
To make matters worse, the two nodes are run by the Stellar foundation, which implies that \textit{the Stellar system is vulnerable to a single point of failure}. 

In Section~\ref{sec:theoretical}, we defined a $(f,x)$-FT system and showed that the maximum value of $x$ in FBA is $\frac{100}{3}$.
Considering the result of cascading failure in the Stellar system, one can see that Stellar is currently a $(2, \frac{50}{11})$-FT system (i.e., $\frac{2}{44}\times100$).
Note that although there are a total of 62 validators, 18 nodes among them were offline; thus, when calculating $x$, we consider only 44 who are active validators.
As compared with the best case in the FBA model (PBFT whose $x$ value is $\frac{100}{3}$), we can see that Stellar is much more vulnerable to node failures. 
\section{Discussion}
\label{sec:discuss}

\subsection{Decentralized Structure of Quorum Slices}

Because the Stellar system is highly dependent on the structure of the quorum slices, a decentralized structure is required for Stellar to be secure.
However, we see that it is highly centralized, based on our analysis result in Section~\ref{sec:analysis}.

To make the structure of Stellar decentralized, the number of validators should be sufficiently large, and all validators should be included in different slices evenly.
However, it is hard to increase the number of validators because currently, there is no internal incentivization.
Only a small number of nodes that can benefit from external benefits are operating as validators. 
Moreover, it is hard for all validators to be evenly trusted by others because users tend to choose only a  few validators that are popular or in partnership with him/her.
As a result, unless there is no internal incentivization system, it would be hard for the structure of quorum slices to be more decentralized because of lack of participants.
Furthermore, even if there are many validators somehow, we cannot ensure that users would include the validators equally in their slice by nature of a trusting relationship.
For these reasons, we expect that it is not easy to make the structure of quorum slices in Stellar decentralized.

\subsection{Mitigation of Cascading Failures}

In this section, we describe potential solutions to reduce the impact of cascading failures and their limitations.

One solution is making the quorum slice structure of Stellar similar that of PBFT because the maximum value of $x$ in Stellar is that of PBFT.
However, there are some limitations.
First, to be the same with the PBFT style, every user is forced to have the same slice; this would not be accepted in Stellar, which claims that each user can have any desired quorum slice.
Second, unlike Stellar, where users select their validators manually, PBFT allows the system itself to rearrange the slice dynamically and securely depending on who is Byzantine.
Surely, if every user always monitors messages from selected validators and changes its slice immediately whenever the validators are considered to have failures, then Stellar may work as a PBFT style.
However, it is not practical to do so because it requires that all users perform monitoring every minute.
In fact, to see how often users change their quorum slices, we observed it every week for a month, and determined that 53 out of 62 validators in the system (85.5\%) had never changed their slices.
This suggests that most users tend not to change slices.
As a result, it may be hard to force users to change their slices dynamically and securely as a PBFT system.

Users can lower the threshold value of a quorum slice, which would certainly increase the liveness of Stellar. 
However, this would also decrease the safety.
Indeed, it would be complicated to find the optimal threshold value to satisfy both safety and liveness at the same time.

If many popular and important financial institutions participated in SCP while advertising their nodes to others, Stellar could have better liveness because a user could choose many diverse validators, decentralizing the quorum slice structure.
Nevertheless, attracting such institutions to Stellar is still challenging. 
\section{Related works}
\label{sec:related}

There are still limitations in applying PBFT in a public blockchain.
For example, because PBFT needs a large number of communications with nodes for the consensus, it is unsuitable for a public blockchain that has to make consensus with large groups.
Furthermore, if PBFT is used in a public blockchain, it would be vulnerable to Sybil attacks.
Despite these limitations, because PBFT has advantages such as high transaction throughput and reduced waste of energy, there have been many attempts to use PBFT in a public blockchain consensus algorithm.
In 2014, Kwon et al. proposed Tendermint~\cite{kwon2014tendermint}, which is a combination of proof-of-stake (PoS) and BFT.
In Tendermint, nodes vote for validators and the voting power is proportional to the stake deposited by voters.
A block is committed when more than 2/3 of validators agree on the same block.
In 2017, Gilad et al. introduced Algorand~\cite{gilad2017algorand}, which scales the consensus to many users using the verifiable random function (VRF) and assigns each node weight based on the money the node has in its account. 
Algorand prevents Sybil attacks by presenting more chances for weighted users to be a member of round leaders.
In addition, it prevents targeted attacks using the VRF by making the attacker cannot predict the random number that is used for electing the next round leader.
Additionally, many other public blockchain consensus algorithms based on the PBFT exist, such as HoneyBadgerBFT~\cite{miller2016honey}, Zilliqa~\cite{ziliqa}, and NEO~\cite{neo}.
Further, Ripple~\cite{armknecht2015ripple} uses PBFT in a private blockchain, and Nimble~\cite{innerbichler2018federated} uses SCP in creating cloud manufacturing platforms.

Bitcoin or Ethereum have been studied from various perspectives, such as analyzing the information propagation process, mining and anonymity~\cite{decker2013information, miller2015discovering, eyal2014majority, sapirshtein2016optimal, kwon2017selfish, kim2018measuring, reid2013analysis, moser2013anonymity}.
On the other hand, there is a relatively small number of studies on other cryptocurrencies~\cite{miller2017empirical}.

As a measurement study, Gencer et al. evaluated the network of Bitcoin and Ethereum using several metrics, and demonstrated that the network is centralized~\cite{gencer2018decentralization}.
While ~\cite{gencer2018decentralization} focused on an analysis of Bitcoin and Ethereum, which are the PoW-based systems, we analyzed another consensus protocol in this study.
Moreover, we studied both the seriousness of the Stellar system's centralization among validators and its impact on the entire system.
Moreover, in contrast to our negative result, another work~\cite{garcia2018federated} discovered that the correctness of SCP is stronger than the one described in the Stellar whitepaper~\cite{mazieres2015stellar}.
However, they assumed a moderately secure structure of quorum slices.
In contrast, in this study, we described that the security of the Stellar system is significantly dependent on the structure of the quorum slices, and analyzed the weakness of liveness in the Stellar system, considering the current structure.

In fact, the concept of a quorum has existed for a long time in distributed computing.
It presents the minimum number of votes on a given transaction required for a consistent operation in a distributed system.
The quorum-based system has been used in various fields, such as a replicated database~\cite{gifford1979weighted, herlihy1986quorum}, algorithms for distributed mutual exclusion~\cite{agrawal1989efficient}, and protocols~\cite{jiang2005quorum}.
Additionally, studies for Byzantine quorum systems have been done for consistency and data availability in a distributed system upon Byzantine failures~\cite{malkhi1998byzantine, alvisi2000dynamic, malkhi2000load}.
Nevertheless, it is difficult to apply these previous studies to Stellar because they make different assumptions.
Unlike previous studies, which assume a fixed set of nodes in a quorum, Stellar involves changing sets of validators over time by allowing any node to participate in Stellar.

\section{Conclusion}
\label{sec:conclusion}
For the first time, we analyzed the Stellar system, which is currently in the top ten cryptocurrencies and is based on a trust model with open membership.
Through our FBA analysis, we proved that FBA is not better than PBFT in terms of safety and liveness.
Especially in Stellar, which is a construction for FBA, we analyzed the structure of the current quorum slices and measured the centrality of the system using two metrics: PR and NR, where NR is proposed as a new metric.
These two metrics demonstrate that the system is highly centralized in a few specific nodes, three of which are operated by the Stellar foundation itself. 
In addition, we found that a cascading failure has a significant impact on the Stellar system.
In fact, the entire system can fail completely in sequence if only the two nodes operated by the Stellar foundation are deleted.


\bibliographystyle{IEEEtran}
\bibliography{references}

\begin{appendices}
\end{appendices}

\end{document}